\newcommand{\Mat}[1]{\bm{{#1}}}
\newcommand{\set}[1]{\mathcal{#1}} 
\newcommand{\setdef}[2][]{
	\left\{
		\ifblank{#1}{}{#1 \hspace{.1cm} \middle| \hspace{.1cm}}
		#2
	\right\}
} 
\newcommand{\lr}[1]{\left(#1\right)} 
\DeclareMathOperator\erf{erf} 
\DeclarePairedDelimiterXPP\onenorm[1]{}\lVert\rVert{_1}{\ifblank{#1}{\:\cdot\:}{#1}} 
\DeclarePairedDelimiterXPP\twonorm[1]{}\lVert\rVert{_2}{\ifblank{#1}{\:\cdot\:}{#1}} 
\newtheorem{theorem}{Theorem}
\newtheorem{assumption}{Assumption}
\newtheorem{definition}{Definition}
\newacronym{MPC}{MPC}{Model Predictive Control}
\newacronym{MPCa}{MPC algorithm}{MPC algorithm}
\newacronym{RMPC}{RMPC}{Robust Model Predictive Control}
\newacronym{SMPC}{SMPC}{Stochastic Model Predictive Control}
\newacronym{SCMPC}{SCMPC}{Scenario Model Predictive Control}
\newacronym{ssc}{SCMPC}{Scenario Model Predictive Control}
\newacronym{MILP}{MILP}{Mixed Integer Linear Program}
\newacronym{PIT}{PIT}{Pointwise\hyp{}In\hyp{}Time}
\newacronym{POMDP}{POMDP}{Partially Observable Markov Decision Process}
\newacronym{MDP}{MDP}{Markov Decision Process}
\newacronym{KKT}{KKT}{Karush\hyp{}Kuhn\hyp{}Tucker}
\newacronym{EV}{EV}{ego vehicle}
\newacronym{TV}{TV}{target vehicle}
\newacronym{CA}{CA}{controlled agent}
\newacronym{DO}{DO}{dynamic obstacle}
\newacronym{cog}{CoG}{center of gravity}
\newacronym{ol}{OL}{Open-Loop}
\newacronym{cl}{CL}{Closed-Loop}
\newacronym{ocp}{OCP}{optimal control problem}
\newacronym{pog}{PG}{\textit{Probabilistic Grid}}
\newacronym{bog}{BG}{\textit{Binary Grid}}
\newacronym{LK}{LK}{lane keeping}
\newacronym{LC}{LC}{lane changing}
\newacronym{LCL}{LCL}{lane changing left}
\newacronym{LCR}{LCR}{lane changing right}
\newacronym{IA}{IA}{insignificant acceleration}
\newacronym{AC}{AC}{acceleration}
\newacronym{BR}{BR}{braking}
\newacronym{og}{OG}{Occupancy Grid}
\newglossaryentry{xik}{type=symbols,
	sort={variable},
	name={\ensuremath{\bm{\xi}_k}},
	description={...}
}
\newglossaryentry{nomxik}{type=symbols,
	sort={variable},
	name={\ensuremath{\overline{\bm{\xi}}_k}},
	description={...}
}
\newglossaryentry{uk}{type=symbols,
	sort={variable},
	name={\ensuremath{\bm{u}_k}},
	description={...}
}
\newglossaryentry{ek}{type=symbols,
	sort={variable},
	name={\ensuremath{\bm{e}_k}},
	description={...}
}
\newglossaryentry{xido}{type=symbols,
	sort={variable},
	name={\ensuremath{\bm{\xi}_k^{\text{DO}}}},
	description={...}
}
\newglossaryentry{xidoj}{type=symbols,
	sort={variable},
	name={\ensuremath{\bm{\xi}_k^{\text{DO},j}}},
	description={...}
}
\newglossaryentry{edo}{type=symbols,
	sort={variable},
	name={\ensuremath{\bm{e}^{\text{DO}}_k}},
	description={...}
}
\newglossaryentry{nomxido}{type=symbols,
	sort={variable},
	name={\ensuremath{\overline{\bm{\xi}}{}_k^{\text{DO}}}},
	description={...}
}
\newglossaryentry{nomxidoj}{type=symbols,
	sort={variable},
	name={\ensuremath{\overline{\bm{\xi}}{}_k^{\text{DO},j}}},
	description={...}
}
\newglossaryentry{nomxidoji}{type=symbols,
	sort={variable},
	name={\ensuremath{\overline{\bm{\xi}}{}_{k,i}^{\text{DO},j}}},
	description={...}
}
\newglossaryentry{xidos}{type=symbols,
	sort={variable},
	name={\ensuremath{\bm{\xi}_k^{\text{DO}}\left( s_i\right)}},
	description={...}
}
\newglossaryentry{xidosj}{type=symbols,
	sort={variable},
	name={\ensuremath{\bm{\xi}_k^{\text{DO},j}\left( s_i\right)}},
	description={...}
}
\newglossaryentry{xidoji}{type=symbols,
	sort={variable},
	name={\ensuremath{\bm{\xi}_{k,i}^{\text{DO},j}}},
	description={...}
}
\newglossaryentry{udo}{type=symbols,
	sort={variable},
	name={\ensuremath{\bm{u}_k^{\text{DO}}}},
	description={...}
}
\newglossaryentry{udoj}{type=symbols,
	sort={variable},
	name={\ensuremath{\bm{u}_k^{\text{DO},j}}},
	description={...}
}
\newglossaryentry{udot}{type=symbols,
	sort={variable},
	name={\ensuremath{\bm{u}_k^{\text{DO}}\left(T_i \right)}},
	description={...}
}
\newglossaryentry{udotj}{type=symbols,
	sort={variable},
	name={\ensuremath{\bm{u}_k^{\text{DO},j}\left(T^{j}_i \right)}},
	description={...}
}
\newglossaryentry{udos}{type=symbols,
	sort={variable},
	name={\ensuremath{\bm{u}_k^{\text{DO}}\left(s_i \right)}},
	description={...}
}
\newglossaryentry{udosj}{type=symbols,
	sort={variable},
	name={\ensuremath{\bm{u}_k^{\text{DO},j}\left(s^j_i \right)}},
	description={...}
}
\newglossaryentry{Udot}{type=symbols,
	sort={variable},
	name={\ensuremath{\bm{U}^{\text{DO}}\left(T_i \right)}},
	description={...}
}
\newglossaryentry{Udos}{type=symbols,
	sort={variable},
	name={\ensuremath{\bm{U}^{\text{DO}}\left(s_i \right)}},
	description={...}
}
\newglossaryentry{wk}{type=symbols,
	sort={variable},
	name={\ensuremath{\bm{w}_k}},
	description={...}
}
\newglossaryentry{wta}{type=symbols,
	sort={variable},
	name={\ensuremath{\bm{w}_k^{\text{ta}}}},
	description={...}
}
\newglossaryentry{wtaj}{type=symbols,
	sort={variable},
	name={\ensuremath{\bm{w}_k^{\text{ta},j}}},
	description={...}
}
\newglossaryentry{wex}{type=symbols,
	sort={variable},
	name={\ensuremath{\bm{w}_k^{\text{ex}}}},
	description={...}
}
\newglossaryentry{wexj}{type=symbols,
	sort={variable},
	name={\ensuremath{\bm{w}_k^{\text{ex},j}}},
	description={...}
}
\newglossaryentry{xisafe}{type=symbols,
	sort={variable},
	name={\ensuremath{\bm{\Xi}_k^{\text{safe}}}},
	description={...}
}
\newglossaryentry{xisafesi}{type=symbols,
	sort={variable},
	name={\ensuremath{\bm{\Xi}_k^{\text{safe}} \left( s_i \right)}},
	description={...}
}
\newglossaryentry{xisafes}{type=symbols,
	sort={variable},
	name={\ensuremath{\bm{\Xi}_k^{\text{safe}} \left( s \right)}},
	description={...}
}
\newglossaryentry{rparam}{type=symbols,
	sort={variable},
	name={\ensuremath{\beta}},
	description={...}
}
\newglossaryentry{rparamta}{type=symbols,
	sort={variable},
	name={\ensuremath{\gls{betata}}},
	description={...}
}
\newglossaryentry{rparamex}{type=symbols,
	sort={variable},
	name={\ensuremath{\gls{betaex}}},
	description={...}
}
\newglossaryentry{betata}{type=symbols,
	sort={variable},
	name={\ensuremath{\beta^{\text{ta}}}},
	description={...}
}
\newglossaryentry{betaex}{type=symbols,
	sort={variable},
	name={\ensuremath{\beta^{\text{ex}}}},
	description={...}
}
\newglossaryentry{PT}{type=symbols,
	sort={variable},
	name={\ensuremath{\set{P}_{\mathcal{T}}}},
	description={...}
}
\title{\LARGE \bf
Collision Avoidance with Stochastic Model Predictive Control for Systems with a Twofold Uncertainty Structure
}
\author{Tim~Br\"udigam$^{1}$,~Jie~Zhan$^{1}$,~Dirk~Wollherr$^{1}$,~and~Marion~Leibold$^{1}$
\thanks{$^{1}$T. Br\"udigam, J. Zhan, D. Wollherr, and M. Leibold are with the Chair of Automatic Control Engineering at the Technical University of Munich, Germany.
{\tt\small \{tim.bruedigam; jie.zhan; dw; marion.leibold\}@tum.de}}
}
\begin{document}

\maketitle
\thispagestyle{empty}
\pagestyle{empty}

\begin{abstract}
Model Predictive Control (MPC) has shown to be a successful method for many applications that require control. Especially in the presence of prediction uncertainty, various types of MPC offer robust or efficient control system behavior. For modeling, uncertainty is most often approximated in such a way that established MPC approaches are applicable for specific uncertainty types. However, for a number of applications, especially automated vehicles, uncertainty in predicting the future behavior of other agents is more suitably modeled by a twofold description: a high-level task uncertainty and a low-level execution uncertainty of individual tasks. In this work, we present an MPC framework that is capable of dealing with this twofold uncertainty. A scenario MPC approach considers the possibility of other agents performing one of multiple tasks, with an arbitrary probability distribution, while an analytic stochastic MPC method handles execution uncertainty within a specific task, based on a Gaussian distribution. Combining both approaches allows to efficiently handle the twofold uncertainty structure of many applications. Application of the proposed MPC method is demonstrated in an automated vehicle simulation study.
\end{abstract}

\section{Introduction}
\label{sec:introduction}

\vspace{-15cm}
\mbox{\small 
This~work~has~been~accepted~to~the~IEEE~2021~International~Conference~on~Intelligent~Transportation~Systems.}
\vspace{14.2cm}

Advances in research on automated systems are facilitating the use of controllers for complex applications, which is especially evident for automated vehicles. In many of these applications, there is one controlled agent, e.g., a vehicle or mobile robot, which is required to act and move among other agents. In order to move efficiently and avoid collisions, it is necessary for the controlled agent to anticipate the future behavior of the surrounding agents. 

The challenge here is that future behavior of other agents is subject to uncertainty. In many applications, this uncertainty consists of two types, task uncertainty and task execution uncertainty. 
Using automated vehicles as an example, the future motion of other surrounding vehicles is first subject to specific maneuvers, such as lane keeping or lane changing. Second, the execution of these maneuvers may vary again. A lane change may be executed quickly and aggressively, or slowly over a longer period of time.

\gls{MPC} is a suitable method to plan motion and trajectories for automated systems in environments with uncertainty. In \gls{MPC} an optimal control problem is solved on a finite horizon, utilizing prediction models to take into account the controlled agent dynamics and the future behavior of other agents. Constraints subject to environment uncertainty, e.g., for collision avoidance, may be handled robustly by using \gls{RMPC} methods for bounded uncertainties \cite{LangsonEtalMayne2004, RawlingsMayneDiehl2017}. However, these robust controllers are often highly conservative.

\gls{SMPC} approaches \cite{Mesbah2016, FarinaGiulioniScattolini2016} provide more efficient solutions compared to \gls{RMPC} by utilizing probabilistic chance constraints instead of hard constraints. These chance constraints enable increased efficiency by allowing a small probability of constraint violation, limited by a predefined acceptable risk. Various \gls{SMPC} methods exist, approximating the chance constraint to obtain a tractable representation that may be solved in an optimal control problem. In general, each \gls{SMPC} method considers one type of uncertainty within the prediction model.

Analytic \gls{SMPC} approaches \cite{SchwarmNikolaou1999, KouvaritakisEtalCheng2010, CarvalhoEtalBorrelli2014} yield an analytic approximation of the chance constraint, but these approaches are mostly restricted to Gaussian uncertainties. In particle-based \gls{SMPC} \cite{BlackmoreEtalWilliams2010} and \gls{SCMPC} \cite{SchildbachEtalMorari2014}, samples of the uncertainty are drawn that are then used to approximate the chance constraint. While arbitrary uncertainty distributions are possible, large numbers of samples are required to provide sufficient approximations for some uncertainty distributions, which increases computational complexity. If mixed uncertainty structures best describe the system behavior, the chance constraint approximations of these \gls{SMPC} approaches are not necessarily suitable. In \cite{BruedigamEtalWollherr2018b} an \gls{SMPC} framework, S+SC MPC, was introduced that utilizes both \gls{SCMPC} and a Gaussian uncertainty-based \gls{SMPC} method, specifically designed for a simple automated vehicle example.

In this paper, we propose an S+SC MPC framework that significantly generalizes the work of \cite{BruedigamEtalWollherr2018b}. In \cite{BruedigamEtalWollherr2018b} a simple S+SC MPC framework was specifically designed for automated vehicles, where only one surrounding vehicle and two possible maneuvers are considered. Here, we present a general S+SC MPC framework, applicable to a variety of automated systems. We specifically focus on collision avoidance, which requires considering multiple other agents that may perform multiple different tasks. 

The proposed S+SC MPC approach utilizes an \gls{SCMPC} approach for task uncertainty and an analytic \gls{SMPC} approach for task execution uncertainty. Combining these two approaches into a single \gls{MPC} optimal control problem allows to efficiently consider the twofold uncertainty structure of many practical applications with task and task execution uncertainty, e.g., automated vehicles \cite{CarvalhoEtalBorrelli2014, SchildbachBorrelli2015, CesariEtalBorrelli2017, BruedigamEtalLeibold2020c, MuraleedharanEtalSuzuki2020}. An automated vehicle simulation study illustrates the applicability of the proposed S+SC MPC framework.

The paper is structured as follows. Section~\ref{sec:problem} introduces the problem statement. The S+SC MPC method is derived in Section~\ref{sec:method}. A simulation study is presented in Section~\ref{sec:results}, followed by conclusive remarks in Section~\ref{sec:conclusion}.

\section{Problem Statement}
\label{sec:problem}

\gls{MPC} for collision avoidance with multiple agents requires two prediction models, one for the \gls{CA} and one for the \glspl{DO} to be avoided. 

We consider the CA dynamics
\begin{IEEEeqnarray}{c}
\label{eq:ca dynamics}
\bm{\xi}_{k+1} = \bm{f} \left( \gls{xik}, \gls{uk} \right) \label{eq:dynamics_CA}
\end{IEEEeqnarray}
depending on the 
nonlinear function $\bm{f}$ 
with state $\gls{xik}$ and input $\gls{uk}$ at time step $k$. 

Two types of uncertainties are considered for the DOs: task uncertainty and task execution uncertainty. This distinction reflects the situation of many applications, where the motion of surrounding agents is divided into discrete tasks with multiple task execution possibilities.

\begin{definition}[Tasks]
At each time step, a DO decides to execute exactly one task $T_i$ defined by the task set $\mathcal{T} = \setdef[T_i]{i = 1, ..., n_{\mathcal{T}}}$. Each task $T_i$ is assigned a probability $p_i$, subject to the probability distribution \gls{PT}, where $\sum_{i=1}^{n_{\mathcal{T}}} p_i = 1$ and $0 < p_1 \leq ... \leq p_{n_{\mathcal{T}}} \leq 1$. A DO input corresponding to task $T_i$ is denoted by $\bm{u}^{\text{DO}}(T_i)$.
\end{definition}

\begin{definition}[Task Execution]
Each task $T_i$ is subject to a nominal motion governed by the DO dynamics, a reference state, and an additive Gaussian uncertainty $ \gls{wex} \sim \mathcal{N} \left(\bm{0}, \bm{\Sigma}_k^{\text{ex}} \right)$ with covariance matrix $\bm{\Sigma}_k^{\text{ex}}$, representing uncertainty while executing task $T_i$.
\end{definition}

We consider multiple DOs. The dynamics for a single DO is then given by
\begin{IEEEeqnarray}{c}
\bm{\xi}_{k+1}^{\text{DO}} = \Mat{A}^{\text{DO}} \gls{xido} +\Mat{B}^{\text{DO}} \gls{udot} + \Mat{G}^{\text{DO}} \gls{wex} \label{eq:dynamics_DO}
\end{IEEEeqnarray}
with the DO state $\gls{xido}$, the input $\gls{uk}^{\text{DO}}$ as well as the state and input matrices $\Mat{A}^{\text{DO}}$, $\Mat{B}^{\text{DO}}$, $\Mat{G}^{\text{DO}}$. A DO stabilizing feedback controller is assumed of the form
\begin{IEEEeqnarray}{c}
\gls{udot} = \Mat{K}^{\text{DO}} \lr{ \gls{xido}  - \bm{\xi}^{\text{DO}}_{k,\text{ref}} \lr{T_i}}
\end{IEEEeqnarray}
with feedback matrix $\Mat{K}^{\text{DO}}$ and a reference state $\bm{\xi}^{\text{DO}}_{k,\text{ref}}$ depending on task $T_i$. The nominal state, assuming zero uncertainty and task $T_i$, follows
\begin{IEEEeqnarray}{c}
\overline{\bm{\xi}}{}_{k+1}^{\text{DO}} = \Mat{A}^{\text{DO}} \gls{nomxido} + \Mat{B}^{\text{DO}} \gls{udot}. \label{eq:dynamics_DO_nom}
\end{IEEEeqnarray}

Collisions with DOs are avoided by determining a set of safe states for the CA.

\begin{definition}
The safe set $\gls{xisafe}$ for time step $k$ ensures that all CA states $\gls{xik} \in \gls{xisafe}$ guarantee collision avoidance at time step $k$.
\end{definition}

We now formulate the \gls{ocp} to be solved within this work. Without loss of generality, the SMPC \gls{ocp} starts at time step $0$ where prediction steps are denoted by $k$. The SMPC \gls{ocp} is given by
\begin{IEEEeqnarray}{rl}
\IEEEyesnumber
J^* &= \min_{\bm{U}} J_N\left(\bm{\xi}_0, \bm{U} \right) \IEEEyessubnumber \label{eq:cost_setup}\\
\textrm{s.t. } & \bm{\xi}_{k+1} = \bm{f} \left( \gls{xik}, \gls{uk} \right)  \IEEEyessubnumber\\
&\bm{\xi}_{k+1}^{\text{DO}} = \Mat{A}^{\text{DO}} \gls{xido} +\Mat{B}^{\text{DO}} \gls{udot} + \Mat{G}^{\text{DO}} \gls{wex} \IEEEeqnarraynumspace \IEEEyessubnumber \label{eq:do_setup}\\
& \gls{uk} \in \set{U}, \hspace{25.5mm} k = 0, ..., N-1  \IEEEyessubnumber\\
& \gls{xik} \in \set{X}, \hspace{25.5mm} k = 1, ..., N  \IEEEyessubnumber\\
&\textrm{Pr}\left( \gls{xik} \in \gls{xisafe}  \right) \geq \gls{rparam}, \hspace{8mm} k = 1, ..., N \IEEEyessubnumber \label{eq:cc_setup}
\end{IEEEeqnarray}
with $\bm{U} = [\bm{u}_0, ..., \bm{u}_{N-1}]$, cost function $J_N$, horizon $N$, actuator constraints $\set{U}$, and deterministic state constraints $\set{X}$. As the DO dynamics~\eqref{eq:do_setup} are subject to uncertainty, the chance constraint \eqref{eq:cc_setup} is employed for collision avoidance. At each time step $k$, the probability of the CA state $\gls{xik}$ lying within the safe set $\gls{xisafe} $ must be larger than the risk parameter $\gls{rparam}= h \lr{\gls{betata}, \gls{betaex}}$, $0 \leq \beta \leq 1$. The function $h \lr{\gls{betata}, \gls{betaex}}$ indicates that $\beta$ depends on a task uncertainty risk parameter \gls{betata} and a task execution uncertainty risk parameter \gls{betaex}.

It is not possible to directly solve the chance-constrained \gls{ocp}. In the following, a method is derived that approximates the chance constraint \eqref{eq:cc_setup} to obtain a tractable \gls{ocp}. We first focus on task uncertainty in Section~\ref{sec:task_unc}, followed by task execution uncertainty in Section~\ref{sec:ex_unc}, which then allows to consider both uncertainties simultaneously as described in Section~\ref{sec:sscmpc}.

\section{Method}
\label{sec:method}

In the following, the S+SC MPC framework is derived, starting with individually focusing on \gls{SCMPC} and \gls{SMPC}.

\subsection{SCMPC for Task Uncertainty}
\label{sec:task_unc}

We first focus on task uncertainty. At each time step, one task is performed. The control action, corresponding to different tasks, may significantly vary between different tasks. Therefore, describing task uncertainty with Gaussian noise is impractical, rendering analytic SMPC approaches inapplicable. Considering every possible task may lead to highly conservative control behavior. However, applying SCMPC is a suitable approach to handle task uncertainty. With SCMPC, task uncertainty may be approximated by a small number of samples, as the number of possible tasks is usually small. In this section no task execution uncertainty is considered, i.e., $\gls{wex} = \bm{0}$.

Here, an SCMPC approach inspired by \cite{SchildbachEtalMorari2014} is used. By drawing $K$ samples from the probability distribution \gls{PT}, the task uncertainty is approximated, yielding the set of samples 
\begin{IEEEeqnarray}{c}
\mathcal{S} = \setdef[s_i]{i = 1, ..., K},
\end{IEEEeqnarray}
where a task $T_i$ is assigned to each sample $s_i$. An agent may execute the same task for multiple time steps. However, the agent task may change at every time step. 

\begin{assumption}
Within each SCMPC \gls{ocp}, each sampled task is assumed to be executed for the entire prediction horizon. \label{ass:onetask}
\end{assumption}

In other words, within the prediction, a sampled task is assumed to continue. This assumption is reasonable, as multiple tasks may be sampled and a new \gls{ocp} with new samples is initiated at each time step.

If Assumption~\ref{ass:onetask} holds, a DO input sequence is obtained for each sample of $\mathcal{S}$. The resulting input sequence $\gls{Udos} = \left[\bm{u}^{\text{DO}}_0 \left( s_i\right), ..., \bm{u}^{\text{DO}}_{N-1}\left( s_i\right)\right]$ depends on the individual inputs \gls{udot}, performing task $T_i$ corresponding to sample $s_i$. Based on \gls{Udos}, the predicted DO states for each sample are obtained according to the DO dynamics~\eqref{eq:dynamics_DO} with $\gls{wex} = \bm{0}$, resulting in the predicted states \gls{xidos} for $k = 1, ..., N$.

Depending on the predicted DO states, a safe set \gls{xisafesi} may be computed for each drawn sample $s_i$. Each safe set requires an individual constraint in the SCMPC \gls{ocp}. Therefore, for the SCMPC approach, the chance constraint~\eqref{eq:cc_setup} is adapted to
\begin{IEEEeqnarray}{c}
\textrm{Pr}\left( \gls{xik} \in \gls{xisafes}  \right) \geq \gls{rparamta}, ~~k = 1, ..., N, ~~ s\in \set{S}. \IEEEeqnarraynumspace \label{eq:cc_scmpc}
\end{IEEEeqnarray}
Multiple methods exist to generate safe sets, e.g., signed distance \cite{SchulmanEtalAbbeel2013} or grid-based methods \cite{BruedigamEtalLeibold2020c}.

The sample size $K$ depends on the chosen risk parameter. We propose a strategy to obtain $K$ that focuses on the least likely task $T_1$ in $\set{T}$.

\begin{theorem} \label{lem:samplesize}
The sample size
\begin{IEEEeqnarray}{c}
K > \log_{1-p_1} \left( \frac{1-\gls{betata}}{p_1}\right) \label{eq:samplesize}
\end{IEEEeqnarray}
ensures that the probability of not having sampled the least probable task $T_1$, if it later occurs, is lower than the allowed risk $1-\gls{betata}$, i.e., \eqref{eq:cc_scmpc} is satisfied.
\end{theorem}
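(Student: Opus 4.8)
The plan is to model the $K$ draws as independent samples from $\mathcal{P}_{\mathcal{T}}$ and to reduce the satisfaction of \eqref{eq:cc_scmpc} to controlling a single failure event: the realized task is the least likely one, $T_1$, but $T_1$ was never drawn among the $K$ samples. The rationale for singling out $T_1$ is that, since $p_1$ is the smallest task probability, $T_1$ is the task most likely to be absent from the sample set, so guarding against it gives a conservative and hence sufficient sample size. Whenever $T_1$ is drawn at least once, its associated safe set enters the program as a constraint in \eqref{eq:cc_scmpc}, so a violation attributable to $T_1$ can arise only on the joint event that $T_1$ both occurs and was missed.

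First I would compute the probability that a single draw is not $T_1$, which equals $1-p_1$. Because the $K$ draws are independent, the probability that $T_1$ appears in none of them is $(1-p_1)^K$. The probability that the task actually executed is $T_1$ equals $p_1$, and, treating the realization as independent of the sampling, the joint probability of the failure event is the product $p_1(1-p_1)^K$.

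Requiring this failure probability to stay below the admissible risk $1-\beta^{\text{ta}}$ gives the condition $p_1(1-p_1)^K < 1-\beta^{\text{ta}}$, equivalently $(1-p_1)^K < (1-\beta^{\text{ta}})/p_1$. Isolating $K$ produces the stated bound \eqref{eq:samplesize}. The one step that needs care is this final inversion: since $0 < p_1 < 1$ (there is more than one task with positive probability), the base $1-p_1$ lies strictly in $(0,1)$, so $\log_{1-p_1}(\cdot)$ is strictly decreasing and applying it to both sides reverses the inequality, yielding $K > \log_{1-p_1}\big((1-\beta^{\text{ta}})/p_1\big)$.

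I expect the genuine obstacle to be conceptual rather than computational: justifying that it suffices to bound the miss probability of the single least-likely task $T_1$, and making explicit the independence between the sampling procedure and the realized task that underlies the product $p_1(1-p_1)^K$. The algebra, including the logarithm-base inversion, is then routine.
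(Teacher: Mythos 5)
Your argument coincides with the paper's own proof: both identify the failure event as ``$T_1$ occurs but was never drawn,'' compute its probability as $p_1(1-p_1)^K$ under i.i.d.\ sampling, and obtain \eqref{eq:samplesize} by solving $p_1(1-p_1)^K < 1-\beta^{\text{ta}}$ for $K$. You are in fact slightly more explicit than the paper, which omits the observation that $\log_{1-p_1}$ has base in $(0,1)$ and is therefore decreasing, so the inequality flips when isolating $K$.
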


\begin{proof}
The proof is based on \cite{BruedigamEtalWollherr2018b}. Given independent and identically distributed samples, the worst-case probability of not sampling task $T_1$, if it later occurs, is given by $p_1(1-p_1)^K$. The sample size $K$ in \eqref{eq:samplesize} then follows from solving for $K$ with $1-\gls{betata} > p_1(1-p_1)^K$, i.e., bounding the worst-case probability given the risk parameter \gls{betata}.
\end{proof}

If the least likely task $T_1$ is actually performed by the DO, this worst-case probability of not having sampled task $T_1$ is lower than the acceptable risk, defined by the SCMPC risk parameter \gls{betata}.

After having introduced an SCMPC approach to handle task uncertainty, the following section introduces an analytic SMPC approximation for task execution uncertainty.

\subsection{SMPC Task Execution Uncertainty}
\label{sec:ex_unc}

We now focus on task execution uncertainty, assuming only one task is possible. In the DO dynamics~\eqref{eq:dynamics_DO}, task execution uncertainty is described by the additive Gaussian uncertainty, representing uncertainty considering the nominal trajectory of a task. Approximating a Gaussian distribution potentially requires a large number of samples, therefore, an analytic SMPC approach is more suitable than SCMPC. The cost function~\eqref{eq:cost_setup} may depend on the DO uncertainty. Therefore, the cost is adjusted based on the expectation value, yielding
\begin{IEEEeqnarray}{c}
J_N = \mathrm{E} \lr{ \sum_{k=0}^{N-1} l\lr{\gls{xik}, \gls{uk}, \gls{wex}} + J_{\text{f}} \lr{\bm{\xi}_N, \bm{w}^\text{ex}_N} }
\end{IEEEeqnarray}
with stage cost $l$ and terminal cost $J_{\text{f}}$.

The constraint $\gls{xik} \in \gls{xisafe}$ may be described by a set of functions 
\begin{IEEEeqnarray}{c}
\bm{d}_k \lr{\gls{xik}, \gls{xido}} \geq \bm{0}  ~~ 	\Leftrightarrow ~~ \gls{xik} \in \gls{xisafe} \label{eq:cc_functions}
\end{IEEEeqnarray}
with $\bm{d}_k = [d_{k,1}, ..., d_{k,n_\text{d}}]^\top$, where $n_\text{d}$ denotes the number of constraint functions. 

In order to find an analytic approximation for the chance constraint~\eqref{eq:cc_setup} with only one task, a linearized description of the chance constraint is required. Therefore, the nonlinear constraint~\eqref{eq:cc_functions} is linearized around the nominal states with $\gls{xido} = \gls{nomxido} + \gls{edo}$ and the prediction error \gls{edo}. Based on \gls{wex}, the prediction error follows $\gls{edo} \sim \mathcal{N}\lr{\bm{0}, \bm{\Sigma}^{\text{e}}_k} $ where
\begin{IEEEeqnarray}{c}
\bm{\Sigma}^{\text{e}}_{k+1} = \bm{\Phi} \bm{\Sigma}_k^{\text{e}} \bm{\Phi}^\top + \Mat{G}^{\text{DO}} \bm{\Sigma}_k^{\text{ex}} {\Mat{G}^{\text{DO}}}^\top  \label{eq:propagation}
\end{IEEEeqnarray}
with $\bm{\Phi} = \Mat{A}^{\text{DO}} + \Mat{B}^{\text{DO}} \Mat{K}^{\text{DO}}$.

The resulting linearized description of~\eqref{eq:cc_functions} is
\begin{IEEEeqnarray}{c}
\bm{d}_k \lr{\gls{xik}, \gls{nomxido}} + \nabla \bm{d}^{\text{DO}}_k \gls{edo}  \geq \bm{0} \IEEEeqnarraynumspace  \label{eq:cc_functions_lin}
\end{IEEEeqnarray}
with
\begin{IEEEeqnarray}{c}
\nabla \bm{d}^{\text{DO}}_k =  \left. \frac{\partial \bm{d}_k }{\partial \gls{xido} } \right|_{\gls{xik}, \gls{nomxido}}. \IEEEeqnarraynumspace \label{eq:nabla}
\end{IEEEeqnarray}

The linearized chance constraint is then given by
\begin{IEEEeqnarray}{c}
\textrm{Pr}\left(\nabla \bm{d}^{\text{DO}}_k \gls{edo}  \geq - \bm{d}_k \lr{\gls{xik}, \gls{nomxido}} \right) \geq \gls{rparamex}, \IEEEeqnarraynumspace \label{eq:cc_lin}
\end{IEEEeqnarray}
which is still a probabilistic expression. However, \eqref{eq:cc_lin} may be approximated into an analytic expression similar to \cite{BruedigamEtalWollherr2018b}.

\begin{theorem}\label{lem:cc_analytic}
The probabilistic chance constraint \eqref{eq:cc_lin} may be approximated by the analytic expression
\begin{IEEEeqnarray}{l}
\IEEEyesnumber \label{eq:th2}
d_{k,i} \lr{\gls{xik}, \gls{nomxido}} \geq \bm{\gamma}_{k,i} \IEEEyessubnumber\\
\gamma_{k,i} = \sqrt{2 \nabla d^{\text{DO}}_{k,i} \bm{\Sigma}^{\text{e}}_k 
{\nabla d^{\text{DO}}_{k,i}}^\top}\erf^{-1}\lr{1-2\gls{betaex}}  \IEEEyessubnumber \IEEEeqnarraynumspace
\end{IEEEeqnarray}
with $\bm{\gamma}_k = [\gamma_{k,1}, ..., \gamma_{k,n_\text{d}}]^\top$ and $0.5 \leq \gls{betaex} \leq 1$.
\end{theorem}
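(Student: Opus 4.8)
The plan is to convert the joint probabilistic constraint~\eqref{eq:cc_lin} into a finite collection of deterministic inequalities by exploiting that the prediction error is Gaussian. First I would decompose the joint chance constraint into its $n_\text{d}$ scalar rows, imposing $\textrm{Pr}\lr{\nabla d^{\text{DO}}_{k,i} \bm{e}^{\text{DO}}_k \geq -d_{k,i}\lr{\bm{\xi}_k, \overline{\bm{\xi}}{}_k^{\text{DO}}}} \geq \beta^{\text{ex}}$ for each $i = 1, \dots, n_\text{d}$. Treating each row individually rather than the vector inequality jointly is precisely the approximation referred to in the statement: for a single scalar inequality the subsequent reformulation is exact, so all of the approximation error is concentrated in this splitting step. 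It can be read either as a per-constraint heuristic or made conservative through Boole's inequality with a suitable risk split, and I would state explicitly which interpretation is adopted.

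Next, for each $i$ I would note that the scalar $y_i := \nabla d^{\text{DO}}_{k,i} \bm{e}^{\text{DO}}_k$ is a linear image of the Gaussian vector $\bm{e}^{\text{DO}}_k \sim \mathcal{N}\lr{\bm{0}, \bm{\Sigma}^{\text{e}}_k}$, hence itself zero-mean Gaussian with variance $\sigma_i^2 = \nabla d^{\text{DO}}_{k,i} \bm{\Sigma}^{\text{e}}_k {\nabla d^{\text{DO}}_{k,i}}^\top$. The scalar chance constraint $\textrm{Pr}\lr{y_i \geq -d_{k,i}} \geq \beta^{\text{ex}}$ is then equivalent, via the Gaussian cumulative distribution, to $\tfrac{1}{2}\lr{1 + \erf\lr{d_{k,i}/(\sqrt{2}\,\sigma_i)}} \geq \beta^{\text{ex}}$, and I would invert the monotone error function to isolate $d_{k,i}$.

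The final step is algebraic: solving the last inequality yields $d_{k,i} \geq \sqrt{2}\,\sigma_i\,\erf^{-1}\lr{2\beta^{\text{ex}} - 1}$, which is exactly~\eqref{eq:th2} once $\sigma_i^2 = \nabla d^{\text{DO}}_{k,i} \bm{\Sigma}^{\text{e}}_k {\nabla d^{\text{DO}}_{k,i}}^\top$ is substituted back. The hypothesis $0.5 \leq \beta^{\text{ex}} \leq 1$ is what makes this meaningful: it ensures $2\beta^{\text{ex}} - 1 \in [0,1]$, so that $\erf^{-1}$ is evaluated on its nonnegative range and $\gamma_{k,i} \geq 0$ genuinely tightens the nominal constraint $d_{k,i}\lr{\bm{\xi}_k, \overline{\bm{\xi}}{}_k^{\text{DO}}} \geq 0$. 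I would flag the sign convention here, since the form $\erf^{-1}\lr{1 - 2\beta^{\text{ex}}}$ written in the statement equals $-\erf^{-1}\lr{2\beta^{\text{ex}} - 1}$ by oddness of $\erf^{-1}$; the two must be reconciled against whether $\beta^{\text{ex}}$ denotes a confidence level or a risk, and I would make that dependence explicit in the write-up.

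I expect the principal obstacle to lie not in the scalar Gaussian manipulation, which is routine, but in justifying the joint-to-individual decomposition: making precise in what sense~\eqref{eq:th2} approximates, or conservatively implies, the original joint constraint~\eqref{eq:cc_lin}, and confirming that the single parameter $\beta^{\text{ex}}$ reused in every row is consistent with whatever risk-allocation argument underlies the reformulation.
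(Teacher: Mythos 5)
Your proposal is correct and follows essentially the same route as the paper's much terser proof: the linear image $\nabla d^{\text{DO}}_{k,i}\bm{e}^{\text{DO}}_k$ of the Gaussian prediction error is univariate Gaussian with variance $\sigma_i^2 = \nabla d^{\text{DO}}_{k,i}\bm{\Sigma}^{\text{e}}_k{\nabla d^{\text{DO}}_{k,i}}^\top$, and the chance constraint is reformulated row by row through the Gaussian quantile function. Your sign remark is well taken and is not addressed in the paper: since $\mathrm{Pr}(\cdot)\ge\beta^{\text{ex}}$ with $0.5\le\beta^{\text{ex}}\le 1$ treats $\beta^{\text{ex}}$ as a confidence level (consistent with the simulations, where $\beta^{\text{ex}}=0.995$ is described as more conservative than $0.8$), the derivation yields $\gamma_{k,i}=\sqrt{2\sigma_i^2}\,\erf^{-1}\lr{2\beta^{\text{ex}}-1}\ge 0$, so the printed $\erf^{-1}\lr{1-2\beta^{\text{ex}}}$ carries the risk-level sign convention of the cited references and, read literally, would relax rather than tighten the nominal constraint.
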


\begin{proof}
The proof follows \cite{CarvalhoEtalBorrelli2014, BruedigamEtalWollherr2018b}. Due to~\eqref{eq:propagation} it holds that $\nabla \bm{d}^{\text{DO}}_k \gls{edo} \sim \mathcal{N}\lr{\bm{0},\nabla d^{\text{DO}}_{k,i} \bm{\Sigma}^{\text{e}}_k {\nabla d^{\text{DO}}_{k,i}}^\top}$ in~\eqref{eq:cc_lin}. The quantile function for univariate normal distributions allows to reformulate~\eqref{eq:cc_lin} into~\eqref{eq:th2}.
\end{proof}

Note that $ \nabla d^{\text{DO}}_{k,i}$ is defined similar to~\eqref{eq:nabla}. The individual approaches for handling task uncertainty and task execution uncertainty are combined in the following section.

\subsection{S+SC MPC Algorithm}
\label{sec:sscmpc}

The results of Section~\ref{sec:task_unc} and Section~\ref{sec:ex_unc} are now combined in order to obtain the S+SC MPC framework, which is able to efficiently handle the mixed uncertainty structure. In addition, multiple DOs are considered with the DO dynamics
\begin{IEEEeqnarray}{c}
\bm{\xi}_{k+1}^{\text{DO},j} = \Mat{A}^{\text{DO},j} \gls{xidoj} + \Mat{B}^{\text{DO},j} \gls{udotj}  + \Mat{G}^{\text{DO},j} \gls{wexj} \label{eq:dynamics_DOs} \IEEEeqnarraynumspace
\end{IEEEeqnarray}
with stabilizing feedback matrix $\Mat{K}^{\text{DO},j}$ for the DOs $j = 1, ..., n_{\text{DO}}$.

The tractable S+SC MPC \gls{ocp} for multiple DOs is then given by
\begin{IEEEeqnarray}{rl}
\IEEEyesnumber \label{eq:ocp_sscmpc}
J^* &= \min_{\bm{U}}  \mathrm{E} \lr{ \sum_{k=0}^{N-1} l\lr{\gls{xik}, \gls{uk}, \gls{wex}} + J_{\text{f}} \lr{\bm{\xi}_N, \bm{w}^\text{ex}_N} } \IEEEyessubnumber \IEEEeqnarraynumspace \label{eq:cost_ssc}\\
\textrm{s.t. } & \bm{\xi}_{k+1} =  \bm{f} \left( \gls{xik}, \gls{uk} \right)  \IEEEyessubnumber\\
&\overline{\bm{\xi}}_{k+1, i}^{\text{DO},j} = \Mat{A}^{\text{DO},j} \gls{nomxidoji} + \Mat{B}^{\text{DO},j} \gls{udosj}  \IEEEeqnarraynumspace \IEEEyessubnumber\\
& \gls{uk} \in \set{U}, \hspace{29,5mm} k = 0, ..., N-1  \IEEEyessubnumber\\
& \gls{xik} \in \set{X}, \hspace{29.5mm} k = 1, ..., N  \IEEEyessubnumber\\
&d^j_{k,i} \lr{\gls{xik}, \gls{nomxidoji}} \geq \bm{\gamma}_{k,i}^j, \hspace{7mm} k = 1, ..., N \IEEEyessubnumber \label{eq:cc_sscmpc}\\
&\gamma_{k,i}^j = \sqrt{2  \nabla d^{\text{DO},j}_{k,i} \bm{\Sigma}^{\text{e},j}_k 
{\nabla d^{\text{DO},j}_{k,i}}^\top }\erf^{-1}\lr{1-2\gls{betaex}} \IEEEyessubnumber \IEEEeqnarraynumspace 
\end{IEEEeqnarray}
with $i =  1, ..., K_j$ where $K_j$ is determined according to~\eqref{eq:samplesize} for each DO, given the DOs $j = 1, ..., n_{\text{DO}}$.

In~\eqref{eq:cc_sscmpc}, an individual approximated chance constraint is generated for each sample $s_i$, depending on $K_j$. While this approach is reasonable for a small number of samples, it becomes computationally expensive for a larger $K_j$. A possible alternative for application is to combine similar individual task in order to reduce the number of total constraints. This approach is illustrated in the simulation example in Section~\ref{sec:results}.

If it is required to guarantee safety or recursive feasibility, the proposed S+SC MPC method may be extended by the safety framework for \gls{SMPC} approaches proposed in~\cite{BruedigamEtalLeibold2021b}.

\section{Simulation Study}
\label{sec:results}

To evaluate the effectiveness of the S+SC MPC algorithm presented in Section~\ref{sec:sscmpc}, a highway scenario involving five \glspl{TV} is simulated, using the Control Toolbox~\cite{GiftthalerEtalBuchli2018}. Here, the \gls{CA} and \glspl{DO} become \gls{EV} and \glspl{TV}, respectively. The initial vehicle configuration is depicted in Figure~\ref{fig:init position scenario 5-tv}.
\begin{figure}
\vspace{1mm}
\centering
\includegraphics[width = 0.8\columnwidth]{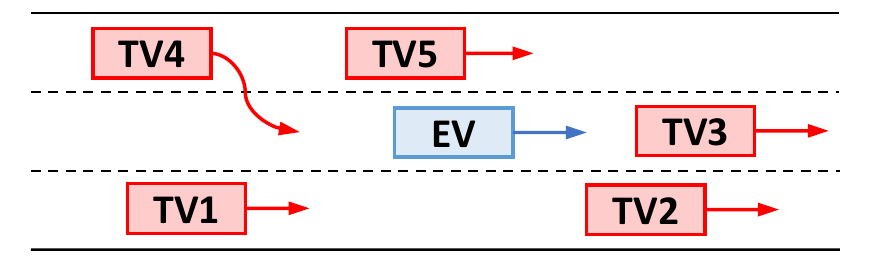}
\caption{Initial scenario configuration.}
\label{fig:init position scenario 5-tv}
\vspace{-4.5mm}
\end{figure}

We first present the results of the simulation study with the proposed S+SC MPC algorithm, and then, for comparison, we investigate the stand-alone algorithms \gls{SMPC} and \gls{SCMPC}. Eventually, we investigate applying S+SC MPC to varying scenario configurations.

\subsection{Simulation Setup}

All simulations are run on an Intel i5-2500K CPU @ 3.30GHz with 15.6GB RAM. Each simulation consists of $n_{\text{iter}}=100$ \gls{MPC} iterations, which is equivalent to a scenario duration of $ \SI{20}{\second}$ with $\Delta t = \SI{0.2}{\second} $. In the following, SI units are assumed for variables and parameters expressed without units.

As a special case of \eqref{eq:ca dynamics}, the \gls{EV} dynamics are represented using the linear, discrete-time point mass model
\begin{equation}
\label{eq:ev}
\bm{\xi}_{k+1}^{\text{EV}} = \Mat{A}  \bm{\xi}_{k}^{\text{EV}} + \Mat{B}  \bm{u}_{k}^{\text{EV}}
\end{equation}
with the \gls{EV} states ${\bm{\xi}_{k} = \left[ x_k, v_{x,k}, y_k, v_{y,k}\right]^\top}$ and inputs ${\bm{u}_{k} = \left[ u_{x,k}, u_{y,k}\right]^\top}$ where
\begin{equation}
\label{eq:AB}
\Mat{A} = \begin{bmatrix}
1 & \Delta t & 0 & 0 \\ 0 & 1 & 0 & 0 \\ 0 & 0 & 1 &  \Delta t \\ 0 & 0 & 0 & 1
\end{bmatrix},
\Mat{B} = \begin{bmatrix}
0.5\Delta t^2 & 0 \\  \Delta t & 0 \\ 0 & 0.5\Delta t^2 \\ 0 &  \Delta t
\end{bmatrix}.
\end{equation}

The \gls{TV} dynamics are assumed to be subject to uncertainties. In the case of vehicles, tasks are maneuvers. Therefore, we consider maneuver uncertainty and maneuver execution uncertainty. The \gls{TV} dynamics are in the form of~\eqref{eq:dynamics_DO} with $\Mat{A}^\text{DO}$, $\Mat{B}^\text{DO}$, states, and inputs similar to~\eqref{eq:AB} as well as $ \Mat{G}^\text{DO} = \text{diag} \left(0.05, \, 0.067, \, 0.013, \, 0.03\right) $ accounting for diverse \gls{TV} uncertainty in longitudinal and lateral direction. The covariance matrix of the normally distributed \gls{TV} maneuver execution uncertainty $ \bm{w}_{k}^{\text{ex}} \sim \mathcal{N} \left(\bm{0},\Mat{\Sigma}_{k}^{\text{ex}}\right) $ is an identity matrix $ \Mat{\Sigma}_{k}^{\text{ex}} = \text{diag} (1, 1, 1, 1) $. Furthermore, additive measurement noise $ \bm{\nu}_{k} \sim \mathcal{N} \left(\bm{0},\Mat{\Sigma}_{k}^{\nu}\right) $ is considered for $x_{k}^\text{TV}$ and $y_{k}^\text{TV}$ with $\Mat{\Sigma}_{k}^{\nu} =  \text{diag} \lr{0.16, 0.01}$.
The \glspl{TV} have multiple maneuver options with associated maneuver probabilities. The possible maneuvers consist of lane changes to left (LCL) and right (LCR), lane keeping (LK), accelerating (AC), braking (BR), and insignificant acceleration (IA), as well as a combination of the lateral and longitudinal maneuvers, resulting in a total of nine possible maneuvers.

The road consists of three lanes with lane width $ l_{\text{lane}} = \SI{3.5}{\metre} $, where the center of the left lane represents $y = 0$. All vehicles are $ a_{\text{veh}} = \SI{6}{\metre} $ in length and $ b_{\text{veh}} = \SI{2}{\metre} $ in width. The initial lateral position of all vehicles coincides with the lateral center of the vehicles' respective lanes with zero lateral velocity. The initial longitudinal positions and velocities of all vehicles are summarized in Table~\ref{tab:initial}. 
\begin{table}
\vspace{2mm}
\centering
    \caption{Initial Vehicle Configuration}
    \label{tab:initial}
    \begin{tabular}{l c c c c c c}
      \toprule 
       & EV & TV1 & TV2  & TV3 & TV4 & TV5\\
      \midrule 
      \textbf{$x$-pos. $(\si{\metre})$} & 0 & -25 & 25 & 40 & -30 & -10\\
      \textbf{$x$-vel. $(\si{\metre\per\second})$} & 27 & 17 & 27 & 27 & 27 & 22\\
      \bottomrule 
    \end{tabular}
  \vspace{-4mm}
\end{table}
The \gls{TV} reference state is chosen as $ \bm{\xi}_{\text{ref},k}^{\text{TV}} = [0, \, v_{x,\text{ref},k}^{\text{TV}}, \, y_{\text{ref},k}^{\text{TV}}, \, 0 ]^\top $, where $v_{x,\text{ref},k}^{\text{TV}}$ and $y_{\text{ref},k}^{\text{TV}}$ may vary over time depending on the scenario. The feedback controller for the \glspl{TV} is 
\begin{equation}
\Mat{K}^{\text{DO}} = \begin{bmatrix} 0 & -1.0 & 0 & 0 \\ 0 & 0 &-0.8 & -2.2 \end{bmatrix}. 
\end{equation}

To prevent collisions, a region around the \gls{TV} is inadmissible for the \gls{EV}. This is referred to as the safety constraint, where the admissible area is the safe set \gls{xisafe}. In line with \cite{BruedigamEtalWollherr2018b}, we impose a safety constraint modeled as an ellipse. Its definition adheres to
\begin{equation}
\label{eq:dk ellipse}
d_{k} = \frac{\left(\Delta x_{k}\right)^{2}}{a^{2}} + \frac{\left(\Delta y_{k}\right)^{2}}{b^{2}} - 1 \geq 0,
\end{equation}
where we decompose the distance between the \gls{EV} and \gls{TV} into a longitudinal and a lateral component $\Delta x_{k} = x_{k}^{\text{EV}} - \overline{x}_{k}^{\text{TV}}$ and $\Delta y_{k} = y_{k}^{\text{EV}} - \overline{y}_{k}^{\text{TV}}$.

The ellipse center coincides with the \gls{TV} center. Therefore, \eqref{eq:dk ellipse} is fulfilled if the \gls{EV} center lies outside the inner space or on the edge of the ellipse, i.e., $d_k \geq 0$. The parameters $a=30$ and $b=2$ represent the semi-major and semi-minor axis of the ellipse, respectively. 
The values of $a$ and $b$ are chosen conservatively, i.e., the area covered by the safety ellipse is larger than the vehicle shape.

To reduce the number of constraints for sampled \gls{TV} maneuvers, we first introduce a method to adapt the safety constraint ellipse~\eqref{eq:dk ellipse}. As an example, we assume that all possible maneuvers are sampled. Then, as mentioned as a possibility in Section~\ref{sec:sscmpc}, we combine the individual constraint ellipses of all sampled maneuvers at each time step, as shown in Figure~\ref{fig:ellipse constraint combined}. If less maneuvers are sampled, the aggregated ellipse only covers the sampled maneuvers.

\begin{figure}
\vspace{1.5mm}
\begin{center}
\resizebox {1\columnwidth} {!} {
\begin{tikzpicture}

\node at (16.55,-2.2) {\Huge \textbf{\ac{MPC} time step}};

\draw[line width = 0.3mm] (2.2,0) -- (32.9,0);
\draw[line width = 0.3mm] [dashed] (2.2,2) -- (32.9,2);
\draw[line width = 0.3mm] [dashed] (2.2,4) -- (32.9,4);
\draw[line width = 0.3mm] (2.2,6) -- (32.9,6);
\draw[line width = 0.3mm,white] (2.2,6.5) -- (32.9,6.5);

\draw [line width=0.8mm, color = black, ->] (2.2,-0.7) -- (32.9,-0.7);
\draw [line width=0.8mm, color = black] (5,-0.7) -- (5,-0.5);
\node at (5,-1.2) {\Huge $k$};
\draw [line width=0.8mm, color = black] (13,-0.7) -- (13,-0.5);
\node at (13,-1.2) {\Huge $k+1$};
\draw [line width=0.8mm, color = black] (23,-0.7) -- (23,-0.5);
\node at (23,-1.2) {\Huge $k+2$};

\draw[line width = 0.2mm] (3,3) -- (26.6,3);
\draw (3,3) .. controls (11,2.8) and  (18,5.2) .. (26.6,5);
\draw (3,3) .. controls (11,3.2) and  (18,0.8) .. (26.6,1);

\draw (12.7,3) [fill=black!40] circle (0.2cm);
\draw (12.7,3.7) [fill=black!40] circle (0.2cm);
\draw (12.7,2.3) [fill=black!40] circle (0.2cm);
\draw (13.3,3) [fill=black!40] circle (0.2cm);
\draw (13.3,3.77) [fill=black!40] circle (0.2cm);
\draw (13.3,2.23) [fill=black!40] circle (0.2cm);

\draw (5,3)[black,fill=gray!10,fill opacity=0.2,line width = 0.5mm] ellipse (2.5cm and 1.25cm);
\draw [rounded corners = 0.5mm, fill = cyan!40](3.5,2.3) rectangle (6.5,3.7);
\node at (5,3) {\huge TV$^{\text{IA,LK}}_k$};

\draw (23,3)[black,fill=gray!10, fill opacity=0.2,line width = 0.5mm] ellipse (9.1cm and 3.4cm); 

\draw (19.4,1)[black,fill=gray!10, fill opacity=0.2,dotted,line width = 0.5mm] ellipse (2.5cm and 1.25cm);
\draw (23,1)[black,fill=gray!10, fill opacity=0.2,dotted,line width = 0.5mm] ellipse (2.5cm and 1.25cm);
\draw (26.6,1)[black,fill=gray!10, fill opacity=0.2,dotted,line width = 0.5mm] ellipse (2.5cm and 1.25cm);
\draw [rounded corners = 0.5mm, fill = cyan!20](17.9,0.3) rectangle (20.9,1.7);
\draw [rounded corners = 0.5mm, fill = cyan!20](21.5,0.3) rectangle (24.5,1.7);
\draw [rounded corners = 0.5mm, fill = cyan!20](25.1,0.3) rectangle (28.1,1.7);
\node at (19.4,1) {\huge TV$^{\text{BR,LCR}}_{k+2}$};
\node at (23,1) {\huge TV$^{\text{IA,LCR}}_{k+2}$};
\node at (26.6,1) {\huge TV$^{\text{AC,LCR}}_{k+2}$};

\draw (19.4,3)[black,fill=gray!10, fill opacity=0.2,dotted,line width = 0.5mm] ellipse (2.5cm and 1.25cm);
\draw (23,3)[black,fill=gray!10, fill opacity=0.2,dotted,line width = 0.5mm] ellipse (2.5cm and 1.25cm);
\draw (26.6,3)[black,fill=gray!10, fill opacity=0.2,dotted,line width = 0.5mm] ellipse (2.5cm and 1.25cm);
\draw [rounded corners = 0.5mm, fill = cyan!20](17.9,2.3) rectangle (20.9,3.7);
\draw [rounded corners = 0.5mm, fill = cyan!20](21.5,2.3) rectangle (24.5,3.7);
\draw [rounded corners = 0.5mm, fill = cyan!20](25.1,2.3) rectangle (28.1,3.7);
\node at (19.4,3) {\huge TV$^{\text{BR,LK}}_{k+2}$};
\node at (23,3) {\huge TV$^{\text{IA,LK}}_{k+2}$};
\node at (26.6,3) {\huge TV$^{\text{AC,LK}}_{k+2}$};

\draw (19.4,5)[black,fill=gray!10, fill opacity=0.2,dotted,line width = 0.5mm] ellipse (2.5cm and 1.25cm);
\draw (23,5)[black,fill=gray!10, fill opacity=0.2,dotted,line width = 0.5mm] ellipse (2.5cm and 1.25cm);
\draw (26.6,5)[black,fill=gray!10, fill opacity=0.2,dotted,line width = 0.5mm] ellipse (2.5cm and 1.25cm);
\draw [rounded corners = 0.5mm, fill = cyan!20](17.9,4.3) rectangle (20.9,5.7);
\draw [rounded corners = 0.5mm, fill = cyan!20](21.5,4.3) rectangle (24.5,5.7);
\draw [rounded corners = 0.5mm, fill = cyan!20](25.1,4.3) rectangle (28.1,5.7);
\node at (19.4,5) {\huge TV$^{\text{BR,LCL}}_{k+2}$};
\node at (23,5) {\huge TV$^{\text{IA,LCL}}_{k+2}$};
\node at (26.6,5) {\huge TV$^{\text{AC,LCL}}_{k+2}$};

\end{tikzpicture}
}
\caption{Qualitative depiction of the combined safety constraint ellipse. Safety ellipses for step $k+1$ omitted. } 
\label{fig:ellipse constraint combined}
\end{center}
\vspace{-4.5mm}
\end{figure}
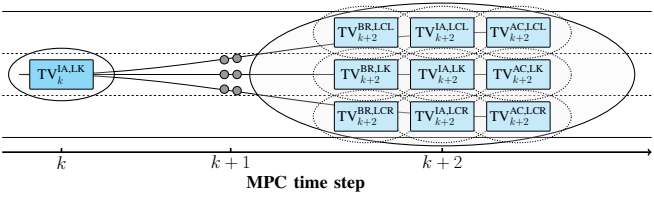

The result is the aggregated ellipse
\begin{subequations}
  \label{eq:ellipse combined}
  \begin{align}
    \label{eq:dk ellipse combined}
    \tilde{d}_k &= \frac{\left( \Delta \tilde{x}_k \right)^{2}}{\tilde{a}_k^{2}} + \frac{\left( \Delta \tilde{y}_k \right)^{2}}{\tilde{b}_k^{2}} - 1 \geq 0, \\
    \label{eq:delta x combined}
    \Delta \tilde{x}_k &= x_k - \tilde{x}_k^{\text{TV}}, \\
    \label{eq:delta y combined}
    \Delta \tilde{y}_k &= y_k - \tilde{y}_k^{\text{TV}}, \\
    \label{eq:ellipse combined x}
    \tilde{x}_k^{\text{TV}} &= \frac{x_k^{\text{TV,IA}} + x_k^{\text{TV,BR}} + x_k^{\text{TV,AC}}}{3}, \\
    \label{eq:ellipse combined y}
    \tilde{y}_k^{\text{TV}} &= \frac{y_k^{\text{TV,LK}} + y_k^{\text{TV,LCL}} + y_k^{\text{TV,LCR}}}{3}
  \end{align}
\end{subequations}
with center $\left( \tilde{x}_k^{\text{TV}}, \tilde{y}_k^{\text{TV}} \right)$. The longitudinal and lateral position of the \gls{TV} corresponding to the respective maneuvers are indicated by the variables $x_k^{\text{TV},M}$ and $y_k^{\text{TV},M}, \, M \in \left\{ \text{IA}, \text{BR}, \text{AC}, \text{LK}, \text{LCL}, \\ \text{LCR} \right\}$, respectively. The combined ellipse exhibits the adjusted semi-major and semi-minor axes
\begin{subequations}
  \label{eq:dk ellipse combined axes}
  \begin{align}
    \label{eq:dk ellipse combined a}
    \tilde{a}_k &= a + 0.5 \left| x_k^{\text{TV, AC}} - x_k^{\text{TV, BR}} \right| + \frac{2}{l_{\text{lane}}} \left( \tilde{b}_k - b \right), \\
    \label{eq:dk ellipse combined b}
    \tilde{b}_k &= b + 0.5\left| y_k^{\text{TV, LCL}} - y_k^{\text{TV, LCR}} \right|.
  \end{align}
\end{subequations}
By generating the aggregated safety ellipse, the number of necessary constraints is reduced. As seen in Figure~\ref{fig:ellipse constraint combined}, the aggregated ellipse does not necessarily cover all individual safety ellipses perfectly, which is still reasonable as the individual safety ellipses are designed larger than necessary. 

For the \gls{MPC} \gls{ocp}~\eqref{eq:ocp_sscmpc}, a prediction horizon $N=12$ is selected. The cost function terms are set to $l = \left\| \Delta \bm{\xi}_{k} \right\| ^2_{\Mat{Q}} \, + \, \left\| {\bm{u}}_{k} \right\| ^2_{\Mat{R}}$, $J_\text{f} = \left\| \Delta \bm{\xi}_{N} \right\| ^2_{\Mat{S}}$, with the cost function weights $\Mat{Q}, \Mat{S} \in \mathbb{R}^{4 \times 4}$, and $\Mat{R} \in \mathbb{R}^{2 \times 2}$, as well as $\left\| \bm{z} \right\| ^2_{\Mat{Z}} = \bm{z}^\top \Mat{Z} \bm{z}$ and $\Delta \bm{\xi}_{k} = \bm{\xi}_k - \bm{\xi}_{\text{ref},k}$ with reference $\bm{\xi}_{\text{ref},k}$. For positional reference tracking in $y$-direction, the \gls{EV} reference is set to its current lane center, while $v_{y,\text{ref},k} = 0$ and $v_{x, \text{ref},k} = \SI{27}{\metre\per\second}$. For the cost function, the first element of $\Delta \bm{\xi}_k$ is neglected, since no reference for $x_k$ is imposed. Here, $\Mat{Q} = \Mat{S} = \text{diag}(0, 3, 0.5, 0.1)$, $\Mat{R} = \text{diag}(1, 0.1)$ are selected.

While the maneuver probabilities are scenario specific and different task uncertainty risk parameters \gls{betata} are evaluated, the task execution risk parameter is chosen to be $\gls{betaex} = 0.8$. In case the original \gls{MPC} problem is infeasible, a recovery \gls{MPC} \gls{ocp} is solved with slack variables to soften the constraints, as described in~\cite{BruedigamEtalWollherr2018b}. For the recovery \gls{ocp}, the slack variable weight in the cost function is $\lambda = 50$ and the task execution risk parameter is changed to a more conservative value $\beta_{\lambda}^\text{ex} = 0.995$, to prioritize safety. In case the recovery problem fails, the solver selects the last feasible point as the solution to the \gls{ocp}.

Apart from the safety constraint, the \gls{EV} plans its motion subject to the constraints $-1.75 \leq y_{k} \leq 8.75$, $-5 \leq u_{x,k} \leq  5$, $-0.5 \leq u_{y,k} \leq 0.5$, $-1 \leq \Delta u_{x,k} \leq 1 $, $-0.2 \leq \Delta u_{y,k} \leq 0.2$ with $\Delta u_{x,k} = u_{x,k}-u_{x,k-1}$, $\Delta u_{y,k} = u_{y,k}-u_{y,k-1}$.

\subsection{Simulation Results}
\label{sec:simulation results}

In the following, the S+SC MPC algorithm is evaluated in the presented scenario. As mentioned, the maneuver risk parameter \gls{betata} is varied, resulting in a varying sample size $K$. Monte Carlo simulations are conducted 150 times for each risk parameter value. 

Each simulation consists of two parts. For the first 20 steps, the \gls{EV} follows a conservative behavior with $\gls{betata} = 0.999$, representing a behavior prediction initialization phase. The \gls{EV} assumes that the \gls{TV} probabilities for lane changes or changes in acceleration are $p^\text{LC} = 0.80$ and $p^\text{AC} = p^\text{BR} = 0.40$. In case a lane changes is possible to the left or right, $p^\text{LC}$ is assigned equally. In the second part from step 21 to step 100, it is assumed that the \gls{EV} has adapted its behavior prediction. Therefore, the probabilities of \gls{TV} maneuvers change to $p^\text{LC} = 0.20$ and $p^\text{AC} = p^\text{BR} = 0.10$. For the second part of the simulation, different risk parameters $\gls{betata} \in \left\{ 0.99, 0.95, 0.89, 0.83 \right\}$ are evaluated. Within the actual simulation, all \glspl{TV} maintain their respective lanes, except TV4, which moves to the center lane. The reference velocities in $x$-direction are $v_{x,\text{ref}}^\text{TV1} = \SI{22}{\metre\per\second}$, $v_{x,\text{ref}}^\text{TV2} = \SI{22}{\metre\per\second}$, $v_{x,\text{ref}}^\text{TV3} = \SI{17}{\metre\per\second}$, $v_{x,\text{ref}}^\text{TV4} = \SI{17}{\metre\per\second}$, $v_{x,\text{ref}}^\text{TV5} = \SI{27}{\metre\per\second}$.

The result of an individual example with $\gls{betata} = 0.95$ is illustrated in Figure~\ref{fig:examplesimu}. 
\begin{figure}
\vspace{1mm}
\centering
\includegraphics[width = 0.95\columnwidth]{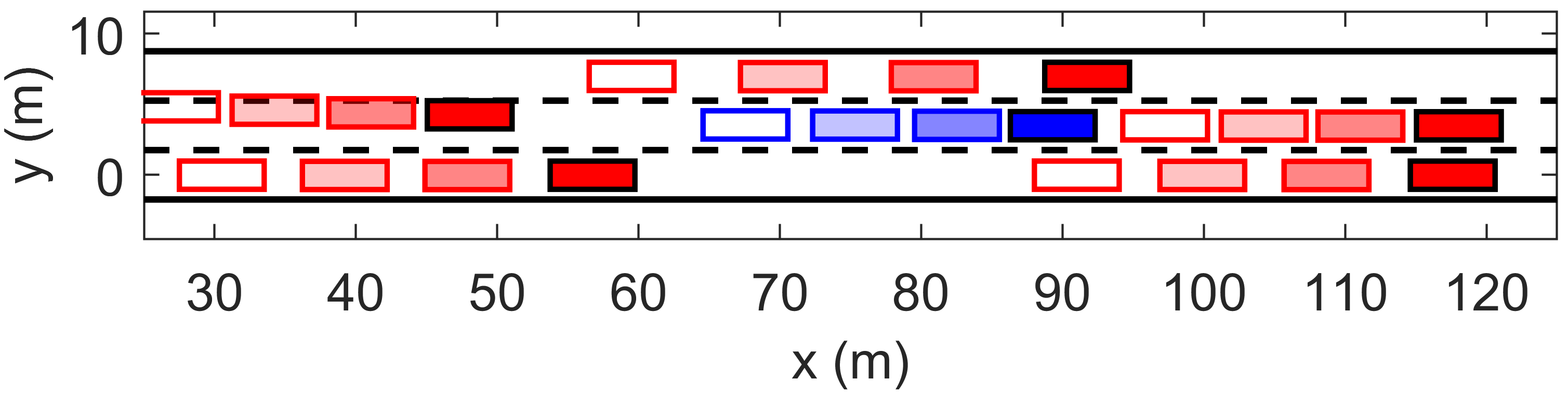} 
\includegraphics[width = 0.95\columnwidth]{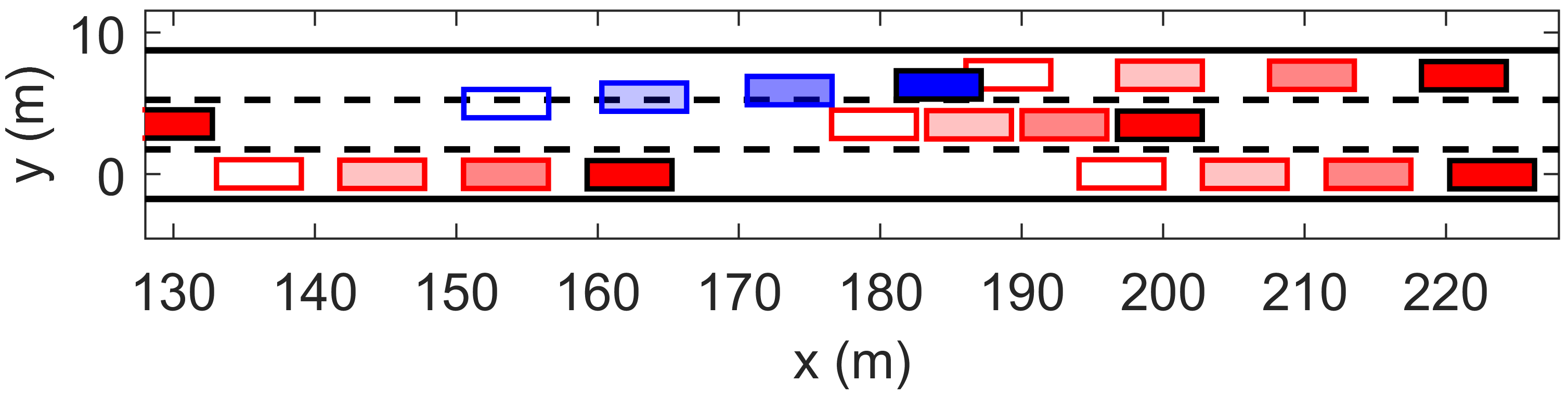}
\caption{Vehicle motion for simulation steps $21$ (top) and $45$ (bottom). The EV is shown in red, TVs in blue. Fading boxes represent past states.}
\label{fig:examplesimu}
\vspace{-2.5mm}
\end{figure}
While there initially is a gap between TV3 and TV5, the \gls{EV} does not plan to overtake, as a potential lane change of either TV3 or TV5 would result in an inevitable collision. Therefore, the \gls{EV} slows down such that TV5 passes TV3 first. Subsequently, the \gls{EV} safely moves to the left lane to overtake TV3.

Even though \gls{SMPC}, in general, allows a small probability of constraint violation, in regular scenarios collisions are avoided as the repetitively updated \gls{SMPC} inputs allow to constantly adjust. For example, it may not be possible to satisfy the chance constraint for a late prediction step within the \gls{SMPC} horizon, due to an unexpected uncertainty realization. The \gls{ocp} is therefore infeasible. However, a collision may still be prevented in the next steps, depending on the future uncertainty realizations. Here, we designed a challenging situation for the \gls{EV}, as lane changes are considered to be probable for all \glspl{TV} and must be accounted for. The results of the Monte Carlo simulations are shown in Table~\ref{tab:results_sscmpc}. 
\begin{table}
\vspace{2mm}
\centering
    \caption{S+SC MPC Simulation Results}
    \label{tab:results_sscmpc}
    \begin{tabular}{l c c c c}
      \toprule 
       \textbf{risk parameter} \gls{betata}  & 0.99 & 0.95 & 0.89  & 0.83 \\
      \midrule 
      \textbf{collisions} & 0 & 0 & 0 & 0 \\
      \addlinespace
      \textbf{cost $J_{100}$} & $3.64 \mathrm{e}4$ & $3.40 \mathrm{e}4$ & $3.59 \mathrm{e}4$ & $3.76 \mathrm{e}4$\\
      \addlinespace
      \makecell[l]{\textbf{infeasible OCP} \\ \textbf{steps}} & $26.3$ & $25.2$ & $24.2$ & $26.6$ \\
      \addlinespace
      \makecell[l]{\textbf{infeasible rec.} \\ \textbf{OCP steps}} & $2.2$ & $3.2$ & $5.2$ & $7.4$ \\
      \bottomrule
    \end{tabular}
  \vspace{-4mm}
\end{table}

Summarizing the simulation results, the first important observation is that no collisions occurred. While the safety ellipse is slightly violated in some simulation runs, the safety ellipse is chosen large enough that no collisions followed. 

The performance is evaluated by computing the cost at each time step, based on the actual states and inputs, with
\begin{IEEEeqnarray}{c}
J_{100} = \sum_{k=0}^{99} \left\| \Delta \bm{\xi}_{k+1} \right\| ^2_{\Mat{Q}} \, + \, \left\| {\bm{u}}_{k} \right\| ^2_{\Mat{R}}.
\end{IEEEeqnarray}
The cost remains on a similar level for all risk parameters, where the best choice in this scenario is $\gls{betata}= 0.95$. Lower risk increases conservatism, while high risk results in less smooth control inputs, again increasing the cost.

As mentioned before, the potential lane changes of all \glspl{TV} pose a challenging situation for the \gls{EV}, resulting in steps where the \gls{ocp} becomes infeasible. However, the steps with successfully solved recovery \glspl{ocp} are significantly more likely, especially for a low accepted level of risk. The average computation time is $\SI{214}{\ms}$.

\subsection{Comparison to SMPC and SCMPC}
\label{sec:comparison}

We now compare the results of S+SC MPC to only applying SMPC or SCMPC. the results are shown in Table~\ref{tab:results_smpc_scmpc}.
\begin{table}
\vspace{2mm}
\centering
    \caption{SMPC and SCMPC Simulation Results}
    \label{tab:results_smpc_scmpc}
    \begin{tabular}{l c | c c c c}
      \toprule
      & SMPC & \multicolumn{4}{c}{SCMPC} \\
      \textbf{risk parameter} & 0.8 & 0.99 & 0.95 & 0.89  & 0.83 \\
      \midrule 
      \textbf{collisions} & 79 & 49 & 43 & 45 & 41 \\
      \addlinespace
      \textbf{cost $J_{100}$} & $3.22 \mathrm{e}4$ & $6.77 \mathrm{e}4$ & $6.27 \mathrm{e}4$ & $6.88 \mathrm{e}4$ & $7.08 \mathrm{e}4$\\
      \addlinespace
      \makecell[l]{\textbf{infeasible OCP} \\ \textbf{steps}} & $31.2$ & $54.1$ & $53.6$ & $55.7$ & $54.7$ \\
      \addlinespace
      \makecell[l]{\textbf{infeasible rec.} \\ \textbf{OCP steps}} & $21.8$ & $33.8$ & $33.6$ & $34.1$ & $34.3$ \\
      \bottomrule 
    \end{tabular}
  \vspace{-4mm}
\end{table}
First, an analytic SMPC algorithm, inspired by \cite{CarvalhoEtalBorrelli2014}, is analyzed with $\gls{betaex} = 0.8$. The advantage of S+SC MPC is that the mixed uncertainty structure is exploited. Applying only SMPC, in order to account for maneuver and execution uncertainty, multiple possible maneuvers would need to be approximated by a Gaussian uncertainty. However, this would result in a major increase of the safety ellipse, covering the entire road width, rendering overtaking other \glspl{TV} impossible. Therefore, in the SMPC simulation, the SMPC algorithm only accounts for maneuver execution uncertainty.

A total of 79 collisions occurred. While the cost is slightly lower compared to S+SC MPC, significantly more steps with infeasible \glspl{ocp} occur, especially for the recovery problem.

In the SCMPC simulation, inspired by~\cite{SchildbachBorrelli2015}, the maneuver execution uncertainty is approximated by samples. To compare a similar situation as in the SMPC simulation, no task uncertainty is considered here. Again, a significant number of simulation runs result in collisions, while the cost also increases compared to S+SC MPC. The steps with infeasible \glspl{ocp} appear more often than in the S+SC MPC simulation runs. The computation times for SMPC and SCMPC are similar to S+SC MPC.

\subsection{Varying Vehicle Settings}

So far, only one vehicle setting is considered. Therefore, we additionally ran 150 simulations with randomly chosen \gls{TV} settings for each simulation run (similar initial EV state as before). The \glspl{TV} get assigned random initial positions $x^\text{TV}_0 \in [-150, 150]$ and are placed on one of the three lanes, i.e., $y_0 \in \{0, 3.5, 7\}$. The constant longitudinal velocity for each \gls{TV} is randomly chosen according to $v_x^\text{TV} \in [17, 27]$ with $v_y^\text{TV} = 0$. It is ensured that all vehicles positioned on similar lanes have enough longitudinal distance $\Delta x \geq 50$, and velocities are selected such that \gls{TV} collisions are avoided. The proposed S+SC MPC method successfully handled all 150 simulation runs and no collisions occurred.

Overall, S+SC MPC allows exploiting the uncertainty structure of the simulation setting, achieving adequate performance and avoiding collisions. While the results presented here are promising, it is to note that the benefits of the proposed method depend on the application setting and to which degree the uncertainty structure may be exploited.


\section{Conclusion}
\label{sec:conclusion}

The proposed S+SC MPC method allows considering the specific uncertainty structure found in many applications, where both task uncertainty and task execution uncertainty are present. As \gls{SCMPC} is suitable for non-Gaussian task uncertainty and \gls{SMPC} copes well with Gaussian execution uncertainty, the combination shows promising results.

While in this work the S+SC MPC method is applied to a vehicle scenario, the framework is designed in a general way, such that it is applicable also to other applications, e.g., human-robot collaboration. In this robotics setting, a robotic arm may have the option of moving to one of several items, while the exact motion towards the specific item may vary.  Without specifically focusing on agents, the proposed framework may also be applicable to process control or finance.

\appendices

\bibliography{./references/Dissertation_bib}

\begin{thebibliography}{10}

\bibitem{LangsonEtalMayne2004}
W.~Langson, I.~Chryssochoos, S.V. Raković, and D.Q. Mayne.
\newblock Robust model predictive control using tubes.
\newblock {\em Automatica}, 40(1):125 -- 133, 2004.

\bibitem{RawlingsMayneDiehl2017}
J.B. Rawlings, D.Q. Mayne, and M.~Diehl.
\newblock {\em Model Predictive Control: Theory, Computation, and Design}.
\newblock Nob Hill Publishing, 2017.

\bibitem{Mesbah2016}
A.~Mesbah.
\newblock Stochastic model predictive control: An overview and perspectives for
  future research.
\newblock {\em IEEE Control Systems}, 36(6):30--44, Dec 2016.

\bibitem{FarinaGiulioniScattolini2016}
M.~Farina, L.~Giulioni, and R.~Scattolini.
\newblock Stochastic linear model predictive control with chance constraints
  – a review.
\newblock {\em Journal of Process Control}, 44(Supplement C):53 -- 67, 2016.

\bibitem{SchwarmNikolaou1999}
A.T. Schwarm and M.~Nikolaou.
\newblock Chance-constrained model predictive control.
\newblock {\em AIChE Journal}, 45(8):1743--1752, 1999.

\bibitem{KouvaritakisEtalCheng2010}
B.~Kouvaritakis, M.~Cannon, S.V. Rakovic, and Q.~Cheng.
\newblock Explicit use of probabilistic distributions in linear predictive
  control.
\newblock {\em Automatica}, 46(10):1719 -- 1724, 2010.

\bibitem{CarvalhoEtalBorrelli2014}
A.~Carvalho, Y.~Gao, S.~Lefevre, and F.~Borrelli.
\newblock Stochastic predictive control of autonomous vehicles in uncertain
  environments.
\newblock In {\em 12th International Symposium on Advanced Vehicle Control},
  Tokyo, Japan, 2014.

\bibitem{BlackmoreEtalWilliams2010}
L.~Blackmore, M.~Ono, A.~Bektassov, and B.C. Williams.
\newblock A probabilistic particle-control approximation of chance-constrained
  stochastic predictive control.
\newblock {\em Trans. Rob.}, 26(3):502--517, June 2010.

\bibitem{SchildbachEtalMorari2014}
G.~Schildbach, L.~Fagiano, C.~Frei, and M.~Morari.
\newblock The scenario approach for stochastic model predictive control with
  bounds on closed-loop constraint violations.
\newblock {\em Automatica}, 50(12):3009 -- 3018, 2014.

\bibitem{BruedigamEtalWollherr2018b}
T.~Br\"udigam, M.~Olbrich, M.~Leibold, and D.~Wollherr.
\newblock Combining stochastic and scenario model predictive control to handle
  target vehicle uncertainty in autonomous driving.
\newblock In {\em 21st IEEE International Conference on Intelligent
  Transportation Systems (ITSC)}, 2018.

\bibitem{SchildbachBorrelli2015}
G.~Schildbach and F.~Borrelli.
\newblock Scenario model predictive control for lane change assistance on
  highways.
\newblock In {\em 2015 IEEE Intelligent Vehicles Symposium (IV)}, pages
  611--616, Seoul, South Korea, June 2015.

\bibitem{CesariEtalBorrelli2017}
G.~Cesari, G.~Schildbach, A.~Carvalho, and F.~Borrelli.
\newblock Scenario model predictive control for lane change assistance and
  autonomous driving on highways.
\newblock {\em IEEE Intelligent Transportation Systems Magazine}, 9(3):23--35,
  Fall 2017.

\bibitem{BruedigamEtalLeibold2020c}
T.~Br\"udigam, F.~di~Luzio, L.~Pallottino, D.~Wollherr, and M.~Leibold.
\newblock Grid-based stochastic model predictive control for trajectory
  planning in uncertain environments.
\newblock In {\em 23rd IEEE International Conference on Intelligent
  Transportation Systems (ITSC)}, 2020.

\bibitem{MuraleedharanEtalSuzuki2020}
A.~Muraleedharan, A.~Tran, H.~Okuda, and T.~Suzuki.
\newblock Grid-based stochastic model predictive control for trajectory
  planning in uncertain environments.
\newblock In {\em IFAC World Congress 2020}, Berlin, Germany, 2020.

\bibitem{SchulmanEtalAbbeel2013}
J.~Schulman, J.~Ho, A.~Lee, I.~Awwal, H.~Bradlow, and P.~Abbeel.
\newblock Finding locally optimal, collision-free trajectories with sequential
  convex optimization.
\newblock In {\em Robotics: Science and Systems 2013}, 2013.

\bibitem{BruedigamEtalLeibold2021b}
T.~Br{\" u}digam, M.~Olbrich, D.~Wollherr, and M.~Leibold.
\newblock Stochastic model predictive control with a safety guarantee for
  automated driving.
\newblock {\em IEEE Transactions on Intelligent Vehicles}, pages 1--1, 2021.

\bibitem{GiftthalerEtalBuchli2018}
M.~Giftthaler, M.~Neunert, M.~{St\"auble}, and J.~Buchli.
\newblock The {Control Toolbox} - an open-source {C++} library for robotics,
  optimal and model predictive control.
\newblock In {\em 2018 IEEE International Conference on Simulation, Modeling,
  and Programming for Autonomous Robots (SIMPAR)}, pages 123--129, May 2018.

\end{thebibliography}
\bibliographystyle{unsrt}

\end{document}